\documentclass[11pt]{article}
\usepackage{KLMM/klmm}
\usepackage{xurl}
\newcommand{\Z}{\mathbb{Z}}
\newcommand{\Q}{\mathbb{Q}}
\newcommand{\one}{\mathbf{1}}
\newcommand{\NN}{\mathbb{N}}

\newcommand{\tr}{\operatorname{tr}}
\newcommand{\sgn}{\operatorname{sgn}}
\title{Exact rational certificates and a Lean formalization of the eleven-variable Vasc inequality}
\author{Dakai~Guo\thanks{State Key Laboratory of Mathematical Sciences, Academy of Mathematics and Systems Science, Chinese Academy of Sciences, and University of Chinese Academy of Sciences. This work is supported by the Strategic Priority Research Program of the Chinese Academy of Sciences under Grants XDA0480502 and XDA0480503.}
    \and Ruichen Qiu\samethanks
    \and Yichuan Cao\samethanks
    \and Ruyong Feng\samethanks
    \and Xiao-Shan Gao\samethanks
}
\date{\today}
\begin{document}
\maketitle
\begin{abstract}
We prove that the cyclic sum
$\sum_{i=0}^{10}(x_i-x_{i+1})/(x_{i+1}+x_{i+2})$ is nonnegative for every
strictly positive real eleven-tuple, with indices taken modulo eleven.
Clearing denominators reduces the problem to a homogeneous polynomial.
Subtracting the smallest coordinate then separates the proof into
nonnegativity of a common-translation derivative and nonnegativity on a
full boundary face. We supply exact rational polynomial certificates in Gram-matrix form for both
statements. The derivative certificate uses a linear multiplier, centered
monomials and cyclic averaging; the boundary certificate uses a strictly
positive quadratic multiplier and uncentered monomials. Their verification
requires 6,069 cyclic coefficient comparisons and 34,304 literal coefficient
comparisons. All 606 certificate matrices are certified by integer
congruences with strictly diagonally dominant images. A Lean 4 development
formally verifies the certificate checks and their connection to the original
inequality. The exact certificate data, replay programs, and formal sources
are publicly available.
\end{abstract}

\noindent\textbf{Keywords:} polynomial nonnegativity; cyclic inequality;
sum-of-square certificates; symbolic computation; Lean; formal verification.

\section{Introduction}\label{sec:intro}
We prove the eleven-variable Vasc inequality using two exact rational
polynomial certificates and formalize the proof in Lean.
For an eleven-tuple $x=(x_0,\ldots,x_{10})$ of positive real numbers, define
\begin{equation}\label{eq:cyclic}
 C_{11}(x)=\sum_{i=0}^{10}
 \frac{x_i-x_{i+1}}{x_{i+1}+x_{i+2}},
\end{equation}
where every subscript is read modulo eleven.
\begin{theorem}\label{thm:main}
For every $x\in\R_{>0}^{11}$, one has $C_{11}(x)\geq0$.
\end{theorem}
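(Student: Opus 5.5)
We will follow the route the abstract announces. The first step clears denominators: multiplying $C_{11}(x)$ by $D(x)=\prod_{i=0}^{10}(x_{i+1}+x_{i+2})>0$ gives a homogeneous polynomial
\[
P(x)=\sum_{i=0}^{10}(x_i-x_{i+1})\prod_{j\neq i}(x_{j+1}+x_{j+2})
\]
of degree $11$ with integer coefficients. The theorem is then equivalent to $P\ge 0$ on $\R_{>0}^{11}$. By continuity it suffices to prove $P\ge0$ on the closed orthant.

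The second step uses symmetry. $P$ is invariant under the cyclic shift, so we may assume $x_0=m=\min_i x_i$. Write $x_i=m+y_i$ with $y_0=0$ and $y_i\ge0$, and set $F(t)=P(y+t\mathbf 1)$ for $t\ge0$. The key claim is that
\[
\frac{d}{dt}P(y+t\mathbf 1)=\sum_k \partial_k P(y+t\mathbf 1)\ge0
\]
for all $y\in\R_{\ge0}^{11}$ and $t\ge0$. Given this, $P(x)=F(m)\ge F(0)=P(y)$. So it remains to show $P\ge0$ on the boundary face $\{y_0=0,\ y_i\ge0\}$. Both targets are polynomial nonnegativity statements on an orthant. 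We reduce them to global statements by the substitution $y_i=z_i^2$ (and $t=s^2$), or equivalently by certifying Pólya/Gram-type identities in which the monomials are nonnegative on the orthant.

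The third step constructs the certificates. For the derivative $Q=\sum_k\partial_kP$, of degree $10$, we seek an identity
\[
\ell\cdot Q=\sum_j \sigma_j\, m_j ,
\]
where $\ell$ is a positive linear form (e.g.\ $\sum x_i$), each $m_j$ is a monomial that is nonnegative on the orthant, and each $\sigma_j$ is an SOS written as $v^TG_jv$ with $G_j\succeq0$ in a basis of monomials centered at the symmetric point. We will impose cyclic invariance on the Gram data to cut the number of unknowns by a factor of about $11$. This invariance is also why comparing cyclic orbit coefficients suffices. For the face polynomial $P(0,y_1,\dots,y_{10})$, which has no cyclic symmetry left, we take a strictly positive quadratic multiplier and a plain monomial basis, and match coefficients literally. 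In both cases we will solve the SDP numerically and then round to rationals. The equality constraints are restored by an exact projection onto the affine coefficient space. PSD-ness is then certified exactly, for instance by exhibiting an integer congruence $G\mapsto U^TGU$ whose image is strictly diagonally dominant; this is checkable by integer arithmetic and formalizable in Lean.

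The main obstacle is the third step, and in particular the existence of a strictly feasible certificate with margin. The inequality is tight at $x=\mathbf 1$, and the Vasc-type cyclic inequalities have near-equality configurations on the boundary. The Gram matrices will therefore have forced kernel directions, and naive rounding will destroy positive semidefiniteness. To handle this we will first try centering the monomial basis at $\mathbf 1$ to isolate the zero, and choosing the multipliers so that the remaining face is strictly positive. We will also remove forced zero blocks by facial reduction before rounding. Only after these steps do we expect the interior point to survive exact rational rounding.
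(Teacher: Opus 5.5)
Your proposal is essentially the paper's proof. You clear denominators, rotate and subtract the minimum so that it suffices to show nonnegativity of $\sum_k\partial_kP$ on the closed orthant and of $P$ on a full closed face. You then certify $\ell\cdot\sum_k\partial_kP$ with $\ell=\sum_i x_i$ using cyclically averaged Gram forms in monomial differences centered at $\mathbf 1$, and the face using a quadratic multiplier that is strictly positive on the nonzero orthant, with plain monomials and literal coefficient matching; every matrix is certified by an integer congruence with strictly diagonally dominant image. The one ingredient you cannot supply blind is the existence of such certificates with enough margin to survive exact rounding, and that is exactly what the paper provides as explicit exact data: 93 cyclic blocks and 512 boundary blocks, all positive definite.
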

The cyclic orientation is as displayed; the positive coordinates may have
arbitrary order and ratios.

The proof combines a short analytic argument with finite algebraic checks.
After clearing denominators, let $P$ denote the resulting degree-eleven
polynomial. We certify its derivative in the direction $\one=(1,\ldots,1)$
and its restriction to the face $x_{10}=0$. Subtracting the minimum
coordinate moves an arbitrary positive tuple to a boundary face, and the
certified derivative controls the return along a common translation.
This separates the domain argument from the construction of the two
polynomial certificates.

For the derivative we use centered monomial vectors, a linear
multiplier and the cyclic group of order eleven. For the boundary we use
ordinary monomial vectors and a quadratic multiplier that combines positive
monomial coefficients with a positive-definite quadratic form. Every matrix
in both certificates admits a supplied integer congruence whose image is
strictly diagonally dominant. Consequently, the complete algebraic
verification consists of finite integer coefficient equalities and strict
integer inequalities.

The Lean formalization connects these data to a theorem
with precisely the hypotheses of Theorem~\ref{thm:main}. The formal
project proves the soundness of its matrix and sparse-polynomial checks,
checks the concrete certificates, and supplies the analytic assembly.
Its large finite computations use \texttt{native\_decide}, whose trust
boundary is described in Section~\ref{sec:trust}.

\subsection{Related work and contribution}
The family in~\eqref{eq:cyclic} is referred to as the Vasc inequality.
Recent finite-certificate work treats the same positive-variable convention
for nine variables \citep{GuoQiuCaoFeng2026}. Xu and Yu posted a preprint
proving the eleven-variable case
\citep{XuYu2026}. Xu and Yu's work and the present work were carried out independently.
Their construction uses $10!$ normalized rank regions,
cumulative gap variables, coefficient positivity,
arithmetic--geometric-mean circuits and subdivision covers. Our proof uses a
different organization: two global polynomial sign conditions, one for a
translation derivative and one for a full boundary face, together with
weighted Gram identities. We have also formalized the proof in Lean,
including the certificate checks and the analytic steps connecting them
to the original inequality. The exact certificate proof reported here was
completed by 29 September 2026.

Exact recovery from numerical sum-of-squares data is an established
symbolic--numeric technique. \citet{KaltofenLiYangZhi2012} study exact
rational-function sum-of-squares certification in polynomial optimization,
including rational recovery and matrix positivity checks.
\citet{MagronSafeyElDin2021} develop algorithms for several exact rational
positivity representations. The role of rational certificate size and
its dependence on representation is further examined by
\citet{DavisPapp2024}, using dual certificates for weighted sums of squares.
Our construction uses primal Gram certificates and requires no convergence
or complexity guarantees from those algorithms. Its contribution is the
explicit two-certificate construction for this inequality and the formal
connection to the real-valued theorem.

The construction addresses this specific inequality; multiplier
optimality, certificate minimality and equality classification remain
outside its scope. We report certificate sizes and replay timings for the
supplied instance, with timings measured in one recorded environment.

\section{Reduction to two polynomial sign conditions}\label{sec:reduction}
Put
\[
 a_i=x_i-x_{i+1},\qquad d_i=x_{i+1}+x_{i+2},\qquad
 D(x)=\prod_{i=0}^{10}d_i.
\]
Define the polynomials
\begin{align}
 P(x)&=\sum_{i=0}^{10}a_i\prod_{j\ne i}d_j,\label{eq:P}\\
 H(x)&=\sum_{r=0}^{10}\partial_rP(x),\label{eq:Hdef}\\
 B(z)&=P(z_0,\ldots,z_9,0),\qquad z\in\R^{10},\label{eq:Bdef}\\
 \ell(x)&=\sum_{i=0}^{10}x_i.\label{eq:ell}
\end{align}
The total degrees of $P$, $H$ and $B$ are respectively eleven, ten and eleven.
All boundary evaluations in this article concern these polynomials.
In particular, the definition of $B$ does not require a value for a
rational summand whose denominator vanishes.

Under the substitution $x\mapsto x+t\one$, the differences $a_i$ remain
unchanged and every $d_i$ becomes $d_i+2t$. Differentiating the finite
products in~\eqref{eq:P} gives the explicit expression
\begin{equation}\label{eq:H}
 H(x)=2\sum_{i=0}^{10}\sum_{j\ne i}
       a_i\prod_{k\notin\{i,j\}}d_k.
\end{equation}
This formula will also serve as the independently expanded target for the
derivative certificate. It involves $110$ ordered pairs $(i,j)$ and nine
binomial factors for each pair.

\begin{lemma}[Denominator clearing]\label{lem:denom}
For every $x\in\R_{>0}^{11}$, one has $D(x)>0$ and
\begin{equation}\label{eq:denom}
 P(x)=D(x)C_{11}(x).
\end{equation}
Consequently, $P(x)\geq0$ implies $C_{11}(x)\geq0$.
\end{lemma}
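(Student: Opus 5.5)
The argument is a direct computation from the definitions in this section, so I would prove it in three short steps.

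First, positivity of $D$. For $x\in\R_{>0}^{11}$, each factor $d_i=x_{i+1}+x_{i+2}$ is a sum of two positive reals, with indices read modulo eleven, so $d_i>0$. Hence $D(x)=\prod_{i=0}^{10}d_i$ is a finite product of positive numbers and is positive. In particular every denominator in~\eqref{eq:cyclic} is nonzero, so $C_{11}(x)$ is well defined.

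Second, the identity~\eqref{eq:denom}. I would distribute $D(x)$ over the finite sum defining $C_{11}(x)$. For each index $i$ the $i$-th summand is $a_i/d_i$. Since $d_i\neq0$, we have $D(x)=d_i\prod_{j\ne i}d_j$, and therefore
\[
 D(x)\,\frac{a_i}{d_i}=a_i\prod_{j\ne i}d_j .
\]
Summing over $i=0,\dots,10$ gives exactly the expression~\eqref{eq:P} for $P(x)$. The only point needing care, and the closest thing to an obstacle, is bookkeeping rather than mathematics: the factorization $D=d_i\prod_{j\ne i}d_j$ must be taken over the same index set $\{0,\dots,10\}$ with the modular reading of subscripts. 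In a formal setting I would state this as the general fact that a product over a finite set equals one factor times the product over the set with that element erased. With that fact, cancelling $d_i$ is legitimate by the first step.

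Third, the consequence. If $P(x)\ge0$, then $C_{11}(x)=P(x)/D(x)$ is a nonnegative number divided by a positive one, hence $C_{11}(x)\ge0$.
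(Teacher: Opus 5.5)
Your proposal is correct and follows the paper's proof step for step. Positivity of each $d_i$ gives $D(x)>0$; summing the identity $D(x)\,a_i/d_i=a_i\prod_{j\ne i}d_j$ over $i$ yields~\eqref{eq:denom}; and dividing by $D(x)$ gives the consequence.
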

\begin{proof}
Every $d_i=x_{i+1}+x_{i+2}$ is positive, so $D(x)>0$. For each $i$,
\[
 D(x)\frac{a_i}{d_i}=a_i\prod_{j\ne i}d_j.
\]
Summing over $i$ gives~\eqref{eq:denom}. Since $D(x)>0$, division by
$D(x)$ shows that $P(x)\geq0$ implies $C_{11}(x)\geq0$.
\end{proof}

\begin{lemma}\label{lem:reduction}
Suppose that
\begin{equation}\label{eq:signs}
 H(x)\geq0\quad(x\in\R_{\geq0}^{11}),\qquad
 B(z)\geq0\quad(z\in\R_{\geq0}^{10}).
\end{equation}
Then Theorem~\ref{thm:main} holds.
\end{lemma}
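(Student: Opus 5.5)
By Lemma~\ref{lem:denom} it suffices to prove $P(x)\geq0$ for every $x\in\R_{>0}^{11}$. The plan is to move $x$ to a boundary face by subtracting its minimum coordinate, use cyclic invariance to make that face $x_{10}=0$, and then integrate the derivative $H$ back along the common translation.

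\emph{Cyclic symmetry.} Let $\sigma$ be the cyclic shift $(\sigma x)_i=x_{i+1}$, indices modulo eleven. Under $\sigma$, $a_i(\sigma x)=a_{i+1}(x)$ and $d_i(\sigma x)=d_{i+1}(x)$. Hence the sum~\eqref{eq:P} is only reindexed, and $P(\sigma x)=P(x)$ as polynomials.

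\emph{Reduction to the face.} Fix $x\in\R_{>0}^{11}$ and let $m=\min_i x_i>0$, attained at some index $k$. Put $y=x-m\one$. Then $y\in\R_{\geq0}^{11}$ and $y_k=0$. Applying a suitable power of $\sigma$, we may assume $k=10$, since this changes neither $P(x)$ nor $P(y)$ relative to each other, nor the nonnegativity of $y$. Then $P(y)=B(y_0,\ldots,y_9)\geq0$ by the second hypothesis in~\eqref{eq:signs}.

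\emph{Return along the translation.} Define $g(t)=P(y+t\one)$ for $t\in[0,m]$. This is a polynomial in $t$, and by the chain rule
\[
 g'(t)=\sum_{r=0}^{10}\partial_rP(y+t\one)=H(y+t\one).
\]
For $t\geq0$ we have $y+t\one\in\R_{\geq0}^{11}$, so the first hypothesis gives $g'(t)\geq0$ on $[0,m]$. By the mean value theorem (or the fundamental theorem of calculus), $g$ is nondecreasing there. Therefore
\[
 P(x)=g(m)\geq g(0)=P(y)\geq0,
\]
and Lemma~\ref{lem:denom} yields $C_{11}(x)\geq0$.

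\emph{Main obstacle.} The argument is short; the only point needing care is the symmetry step. One must check that the index shift permutes the $a_i$ and $d_i$ consistently, so that $P$ is genuinely invariant rather than invariant up to sign or orientation reversal; in Lean this is a finite reindexing of sums over $\Z/11\Z$. An alternative that avoids the symmetry is to use the hypotheses at all eleven faces, but $B$ is defined only for the face $x_{10}=0$, so the shift invariance of $P$ is what makes a single boundary certificate sufficient.
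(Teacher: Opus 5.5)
Your proposal is correct and follows essentially the same route as the paper: subtract the minimum coordinate, use the cyclic invariance of $P$ to move the zero to index ten and apply $B\geq0$, then use $H\geq0$ along $y+t\one$ to return monotonically to $x$. The differences are cosmetic. You rotate $x$ itself (taking $k=10$ without loss of generality), whereas the paper rotates only $y$. You also argue monotonicity via the mean value theorem where the paper integrates $H$; the paper's Lean formalization itself uses the monotonicity form.
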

\begin{proof}
Fix $x\in\R_{>0}^{11}$ and put $t=\min_i x_i$ and $y=x-t\one$.
Then $t>0$, all coordinates of $y$ are nonnegative, and at least one is
zero. Let $\rho$ denote the cyclic permutation of coordinates. The polynomial
$P$ is invariant under this permutation: shifting all indices reindexes both
the sum in~\eqref{eq:P} and each product with one excluded index. A suitable
power of $\rho$ therefore places a zero coordinate of $y$ in position ten
without changing $P(y)$. The boundary assumption gives
$P(y)\geq0$.

For $0\leq s\leq t$, every coordinate of $y+s\one$ is nonnegative.
The chain rule and~\eqref{eq:Hdef} give
\[
 \frac{d}{ds}P(y+s\one)=H(y+s\one)\geq0.
\]
Integration of this polynomial derivative yields
\[
 P(x)-P(y)=\int_0^t H(y+s\one)\,ds\geq0.
\]
Hence $P(x)\geq0$. Lemma~\ref{lem:denom} now gives
$C_{11}(x)\geq0$.
\end{proof}

The boundary certificate in Lemma~\ref{lem:reduction} covers the entire
closed face: the minimum can occur at several coordinates, so the
translated tuple may lie on a further subface. Positivity of the original
denominators is used only after returning to the positive tuple.

\section{An exact certificate language}\label{sec:language}
For $e=(e_0,\ldots,e_{n-1})\in\NN^n$, write
$x^e=\prod_i x_i^{e_i}$ and $|e|=\sum_i e_i$. The convention for evaluating
monomials is $x_i^0=1$, including at $x_i=0$.
The certificates have terms of the form
\[
 x^m V(x)^TQ V(x),
\]
where $m$ is a binary exponent vector, $V$ is a vector of monomials or
monomial differences, and $Q$ is a rational symmetric matrix. On the
nonnegative orthant, the monomial factor $x^m$ is nonnegative. Thus it
suffices to establish the sign of the quadratic form.

\subsection{Triangular polynomial coordinates}
A $q\times q$ symmetric matrix is encoded by $q(q+1)/2$ rational numbers
$u_{ab}$, read in increasing column order $b$, and then in increasing row
order $a\geq b$. The interpretation is
\begin{equation}\label{eq:tri}
 Q_{aa}=u_{aa},\qquad Q_{ab}=Q_{ba}=\frac{u_{ab}}2\quad(a>b).
\end{equation}
Consequently,
\begin{equation}\label{eq:tri-form}
 V^TQV=\sum_{a\geq b}u_{ab}V_aV_b.
\end{equation}
The stored coordinates are coefficients of products in the quadratic
polynomial; off-diagonal coordinates are twice the corresponding entries
of $Q$.

\subsection{Integer congruence certificates}
The same sign test is used for every Gram matrix and for the matrix in the
boundary multiplier.
Positive definiteness is a convenient uniform certificate for the required
nonnegativity; no strict positivity of an individual Gram term is needed
below.
\begin{lemma}\label{lem:matrix}
Let $Q\in\Q^{q\times q}$ be symmetric. Suppose that $c>0$, that $A=cQ$
has integer entries, and that a square integer matrix $R$ satisfies
\begin{equation}\label{eq:dd}
 J=RAR^T,\qquad J_{ii}>\sum_{j\ne i}|J_{ij}|\quad(1\leq i\leq q).
\end{equation}
Then $Q$ is positive definite.
\end{lemma}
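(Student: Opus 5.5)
The plan is to pass positive definiteness from the strictly diagonally dominant matrix $J$ back to $Q$ through the congruence $J=RAR^T$ and the scaling $A=cQ$. I would do this in three steps.

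First, I record that $J$ is symmetric. Since $Q$ is symmetric, so is $A=cQ$, and hence $J^T=(RAR^T)^T=RA^TR^T=J$. By~\eqref{eq:dd}, each diagonal entry satisfies $J_{ii}>\sum_{j\ne i}|J_{ij}|\geq0$, so in particular $J_{ii}>0$.

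Second, I show that $J$ is positive definite by a direct Gershgorin-type estimate on the quadratic form, which avoids eigenvalues. For real $w\neq0$, use $|w_iw_j|\le\tfrac12(w_i^2+w_j^2)$ and the symmetry of $J$:
\[
 w^TJw=\sum_i J_{ii}w_i^2+\sum_{i\ne j}J_{ij}w_iw_j
 \;\geq\;\sum_i J_{ii}w_i^2-\sum_{i\ne j}|J_{ij}|\tfrac12(w_i^2+w_j^2)
 =\sum_i\Bigl(J_{ii}-\sum_{j\ne i}|J_{ij}|\Bigr)w_i^2 .
\]
Every bracket on the right is strictly positive, and some $w_i\neq0$, so $w^TJw>0$.

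Third, I deduce that $R$ is invertible and transfer the sign. Suppose $R$ were singular. Then $R^T$ is also singular, so there is $w\neq0$ with $R^Tw=0$, and $w^TJw=(R^Tw)^TA(R^Tw)=0$. This contradicts the second step, so $R$ is invertible. Now take any $v\neq0$ and put $w=R^{-T}v\neq0$. Then
\[
 v^TAv=w^TRAR^Tw=w^TJw>0,
\]
so $A$ is positive definite. Finally $v^TQv=c^{-1}v^TAv>0$ because $c>0$, so $Q$ is positive definite.

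The only step that needs care is the third one. The hypothesis does not say that $R$ is invertible, and without invertibility a congruence could in principle hide indefinite directions of $A$. The argument above deduces invertibility from the definiteness of $J$ itself, so no separate determinant check is needed. This also matches the paper's claim that only integer equalities and strict inequalities need to be verified.
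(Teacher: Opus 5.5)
Your proof is correct and follows the paper's argument. You show $J$ is positive definite by strict diagonal dominance, deduce that $R^T$ is injective (hence invertible) from $w^TJw=(R^Tw)^TA(R^Tw)$, and transfer the sign to $Q$ using $c>0$. The only difference is cosmetic: you use the inequality $|w_iw_j|\le\tfrac12(w_i^2+w_j^2)$, while the paper writes the same estimate as the exact identity $v^TJv=\sum_i\delta_iv_i^2+\sum_{i<j}|J_{ij}|\bigl(v_i+\sgn(J_{ij})v_j\bigr)^2$.
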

\begin{proof}
Set $\delta_i=J_{ii}-\sum_{j\ne i}|J_{ij}|$, so that $\delta_i>0$.
Expansion gives, for every $v\in\R^q$,
\begin{equation}\label{eq:dd-squares}
 v^TJv=\sum_i\delta_i v_i^2+
   \sum_{i<j:\,J_{ij}\ne0}|J_{ij}|
       \bigl(v_i+\sgn(J_{ij})v_j\bigr)^2.
\end{equation}
The coefficient of $v_i^2$ on the right is $J_{ii}$, and the coefficient
of $v_iv_j$ is $2J_{ij}$. A zero off-diagonal entry contributes zero.
If $v\ne0$, some term $\delta_i v_i^2$ is positive and the remaining
terms are nonnegative. Therefore $J$ is positive definite.

If $R^Tv=0$, then $v^TJv=(R^Tv)^TA(R^Tv)=0$, which forces $v=0$.
Thus the square linear map $R^T$ is injective and hence invertible.
For $w\ne0$, write $w=R^Tv$ with $v\ne0$. It follows that
$w^TQw=c^{-1}v^TJv>0$.
\end{proof}

Strict diagonal dominance also establishes the invertibility of the
supplied congruence matrix. Given $A$ and $R$, multiplication and the row
tests in~\eqref{eq:dd} take place entirely in $\Z$.

\section{The common-translation certificate}\label{sec:translation}
Let $\rho$ denote a cyclic rotation of the eleven variables, acting on the
whole polynomial to which it is applied. The derivative certificate is
\begin{equation}\label{eq:trans-cert}
 \ell(x)H(x)=\sum_{(m,E)}\sum_{r=0}^{10}
 \rho^r\!\left[x^m V_E(x)^TQ_{m,E}V_E(x)\right].
\end{equation}
The outer sum contains $93$ specified blocks. In each block,
$m\in\{0,1\}^{11}$ and
$E=(e_0,\ldots,e_q)$ is a list of distinct nonnegative exponent vectors
satisfying
\[
 |m|+2|e_a|=11\qquad(0\leq a\leq q).
\]
The centered vector in this block is
\begin{equation}\label{eq:centered}
 V_E(x)=\bigl(x^{e_1}-x^{e_0},\ldots,x^{e_q}-x^{e_0}\bigr)^T.
\end{equation}
These centered vectors form the basis of the derivative certificate.

The data contain $248,264$ triangular polynomial coordinates. Reading them
block by block defines every $Q_{m,E}$. The matrix dimensions sum to
$5,975$, and the largest dimension is $184$. Each block also has a
positive integer \texttt{qd} and a square integer matrix $R$ such that
$A=\texttt{qd}\,Q$ is integral and~\eqref{eq:dd} holds. The exact
verification checks all $5,975$ strict row inequalities for the supplied
arrays. Thus
Lemma~\ref{lem:matrix} proves positive definiteness of all $93$ matrices.
A further positive integer \texttt{rd} in the data specifies the scale of
the corresponding rational congruence $R/\texttt{rd}$; this scale is not
needed for the sign test.

\subsection{Cyclic coefficient comparison}\label{sec:orbits}
The common denominator of the stored triangular coordinates is
\[
L=2\,000\,000\,000\,000.
\]
Every stored coordinate is an integer multiple of $1/L$. Thus multiplying
the certificate identity by $L$ gives integer coefficients: $G_t$ is $L$
times the certificate-side coefficient, while $Lf_t$ is $L$ times the
independently expanded target coefficient.
For an exponent $e$, let $[e]$ be the lexicographically least element of
its cyclic orbit. For degree-eleven exponents in eleven variables, the
only orbit with a nontrivial stabilizer is that of $\one$. Indeed, any
nonzero cyclic shift generates the group of order eleven. An exponent
fixed by such a shift has all coordinates equal, and its total degree
then forces every coordinate to be one. Hence define
\[
 s_e=\begin{cases}11,&e=\one,\\1,&e\ne\one.\end{cases}
\]
If $f=\sum_e c_e x^e$ and $t$ is an orbit representative, then
\begin{equation}\label{eq:orbit}
 [x^t]\sum_{r=0}^{10}\rho^r f
       =s_t\sum_{[e]=t}c_e.
\end{equation}
For each exponent in the orbit of $t$, exactly $s_t$ rotations send it to
$t$, which proves this multiplicity formula.

By~\eqref{eq:tri-form}, a stored coordinate $u_{ab}$ contributes
$u_{ab}x^mV_aV_b$. Its four monomials are
\begin{equation}\label{eq:four}
 x^{m+e_a+e_b}-x^{m+e_a+e_0}
 -x^{m+e_0+e_b}+x^{m+2e_0}.
\end{equation}
Multiplying their coefficients by $L$ gives the integer weights
\[Lu_{ab},\quad-Lu_{ab},\quad-Lu_{ab},\quad Lu_{ab}.\]
Both middle contributions must be
retained when $a=b$. Replace every exponent by its cyclic representative,
multiply its contribution by the stabilizer factor, and sum over all
coordinates of all blocks. Let the resulting integer at representative
$t$ be $G_t$.

Independently expand~\eqref{eq:H}, using the $110$ ordered pairs, and
multiply by $\ell$. Let $f_t$ be the literal coefficient of $x^t$ in this
expansion. The frame contains $6,069$ distinct representatives. The
checking procedure verifies that every exponent generated on either side
belongs to one of these orbits and that target coefficients agree
throughout each orbit. It then checks
\begin{equation}\label{eq:trans-rows}
 G_t=Lf_t
\end{equation}
for every listed representative, including those with zero coefficients.
All comparisons hold exactly for the supplied data. Both sides
of~\eqref{eq:trans-cert} are cyclically invariant. Equation~\eqref{eq:orbit}
and the support coverage check therefore prove equality at every exponent;
outside the checked orbits both sides have zero coefficient. This
establishes the polynomial identity~\eqref{eq:trans-cert}.

\subsection{The derivative sign on the closed orthant}
\begin{proposition}\label{prop:H}
For every $x\in\R_{\geq0}^{11}$, one has $H(x)\geq0$.
\end{proposition}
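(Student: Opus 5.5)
The plan is to read off the sign of $H$ from the identity~\eqref{eq:trans-cert}, established by the cyclic coefficient comparison in Section~\ref{sec:orbits}. Fix $x\in\R_{\geq0}^{11}$. First I show that the right-hand side of~\eqref{eq:trans-cert} is nonnegative at $x$. Each block $(m,E)$ has a matrix $Q_{m,E}$ that is positive definite by Lemma~\ref{lem:matrix}, applied with the supplied \texttt{qd} and $R$. Hence $V_E(y)^TQ_{m,E}V_E(y)\geq0$ for every real $y$. Since $m$ is binary, $y^m\geq0$ whenever $y\in\R_{\geq0}^{11}$, using the convention $y_i^0=1$. The rotation $\rho^r$ only permutes coordinates, so it maps the closed orthant to itself. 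Each rotated term $\rho^r[x^mV_E^TQ_{m,E}V_E]$ is therefore the same kind of product evaluated at a permuted nonnegative point, and so it is nonnegative. Summing over the $93$ blocks and the eleven rotations gives $\ell(x)H(x)\geq0$.

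Next I divide by $\ell$. If $\ell(x)>0$, which holds whenever some coordinate is positive, then $H(x)\geq0$ follows at once. The only remaining point of the closed orthant is $x=0$. There $H(0)=0$, because $H$ is homogeneous of degree ten. This is visible from~\eqref{eq:H}: every summand is a product of linear forms, namely one $a_i$ and nine $d_k$, so every summand vanishes at the origin. This completes the argument.

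I expect no real obstacle in this assembly. All the substantive work sits in the identity~\eqref{eq:trans-cert} and in the $5{,}975$ row inequalities, and both are treated as already established. The one point needing care is the boundary case $\ell(x)=0$. I handle it by homogeneity, as above, rather than by a limiting argument; a limiting argument via continuity of the polynomial $H$ would also work.
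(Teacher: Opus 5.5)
Your proposal is correct and matches the paper's proof step for step. Nonnegativity of every rotated term $x^mV_E^TQ_{m,E}V_E$ on the closed orthant gives $\ell(x)H(x)\geq0$; division by $\ell(x)>0$ handles $x\ne0$, and the origin is settled by $H(0)=0$ from formula~\eqref{eq:H}.
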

\begin{proof}
Every monomial factor $x^m$ in~\eqref{eq:trans-cert} is nonnegative, and
every quadratic form is nonnegative by the matrix certificates. These
properties are preserved by cyclic rotation. Thus $\ell(x)H(x)\geq0$.
If $x\ne0$, then $\ell(x)>0$ and division gives $H(x)\geq0$. At the
origin, formula~\eqref{eq:H} has positive degree and gives $H(0)=0$.
\end{proof}

\section{The full boundary certificate}\label{sec:boundary}
The boundary certificate uses ten independent nonnegative variables
$z=(z_0,\ldots,z_9)$. Set $N=10^{14}$, and let $\varepsilon_i$ denote the
$i$th unit vector in ten coordinates. Enumerate the $55$ pairs $(i,j)$,
$0\leq i\leq j\leq9$, in lexicographic order, and associate to each pair
the exponent $\alpha=\varepsilon_i+\varepsilon_j$.
The supplied data give positive integer numerators $p_\alpha$ and
integer triangular coordinates $u^W_\alpha$. The matrix $W$ is defined
using~\eqref{eq:tri} with coordinates $u^W_\alpha/N$. The pair $(i,j)$
corresponds to column $i$, row $j$, so this pair order agrees with the
triangular storage convention. Define
\begin{equation}\label{eq:M}
 M(z)=\frac1N\sum_\alpha p_\alpha z^\alpha+z^TWz.
\end{equation}
All $55$ weights are strictly positive; the smallest numerator is
$45\,839\,617\,787$. The data also satisfy the exact normalization
\begin{equation}\label{eq:norm}
 \frac1N\sum_\alpha p_\alpha+\tr(W)=1.
\end{equation}
The normalization fixes the scale of the supplied multiplier. The sign
argument below uses positivity, not this choice of scale.

The boundary identity is
\begin{equation}\label{eq:boundary-cert}
 M(z)B(z)=\sum_{(m,E)}z^m Z_E(z)^TQ_{m,E}Z_E(z).
\end{equation}
There are $512$ blocks. Here $m\in\{0,1\}^{10}$,
$E=(e_0,\ldots,e_{q-1})$ consists of distinct nonnegative exponent vectors,
and
\[
 |m|+2|e_a|=13,\qquad
 Z_E(z)=\bigl(z^{e_0},\ldots,z^{e_{q-1}}\bigr)^T.
\]
The boundary certificate uses uncentered vectors and literal ten-variable
expressions, without cyclic averaging.

The data contain $1,235,364$ integer triangular coordinate numerators for
the $512$ Gram matrices. Dividing by $N$ and
applying~\eqref{eq:tri} determines the matrices, whose maximum dimension
is $115$. The multiplier matrix $W$ has dimension ten. For each of these
$513$ matrices, take $A=2NQ$, or $A=2NW$ for the multiplier. Its diagonal
entries are twice the stored integer numerators, and its off-diagonal
entries are the stored numerators themselves. The certificate supplies
one matching square integer matrix $R$ for each $A$, with the multiplier
matrix first and the Gram blocks in their prescribed order. All $34,314$
strict row inequalities in~\eqref{eq:dd} hold for the supplied arrays. Thus
$W$ and all boundary
Gram matrices are positive definite.

\subsection{Literal coefficient comparison}
Expand $B$ directly from~\eqref{eq:P}, setting $x_{10}=0$ and $x_i=z_i$
for $i<10$. For each $i$, multiply the ten binomials whose indices differ
from $i$, multiply by $x_i-x_{i+1}$, and sum. Terms containing $x_{10}$
vanish. This yields $2,302$ nonzero monomials of degree eleven. The
construction uses the defining finite products rather than a trusted
cache of boundary coefficients.

By the triangular convention, the coefficient of $z^\alpha$ in $z^TWz$
is $u^W_\alpha/N$ for every diagonal or off-diagonal pair $\alpha$.
Therefore
\begin{equation}\label{eq:boundary-int}
 NM(z)B(z)=B(z)\sum_\alpha
                  (p_\alpha+u^W_\alpha)z^\alpha.
\end{equation}
Let $\beta_e$ be the integer coefficient of $z^e$ in the right-hand side
of~\eqref{eq:boundary-int}. On the Gram side, each triangular numerator
$u_{ab}$ contributes $u_{ab}$ at exponent $m+e_a+e_b$; for an off-diagonal
entry, the factor $2$ in the quadratic form cancels the factor $1/2$ in
~\eqref{eq:tri}. Let $\gamma_e$
be the sum of these integer contributions over all blocks. These are
precisely the coefficients of $N$ times the right-hand side
of~\eqref{eq:boundary-cert}.

The frame lists $34,304$ distinct degree-thirteen exponents. Every
exponent generated on either side lies in this list, and the check
verifies $\beta_e=\gamma_e$ at every listed exponent, including zero
rows. All these equalities hold. Outside the list both coefficients
vanish. Unlike the derivative comparison, this is a literal coefficient
comparison without orbit representatives or stabilizer factors.
Uniqueness of polynomial coefficients proves~\eqref{eq:boundary-cert}.

\subsection{Strict positivity of the multiplier}
\begin{proposition}\label{prop:B}
For every $z\in\R_{\geq0}^{10}$, one has $B(z)\geq0$.
\end{proposition}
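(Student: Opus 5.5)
The plan is to read the sign of $B$ off the identity~\eqref{eq:boundary-cert}, which was established by the literal coefficient comparison, and to divide by the multiplier wherever it is positive. The argument parallels Proposition~\ref{prop:H}. Here the multiplier is $M$ rather than $\ell$, so its strict positivity on the nonzero part of the orthant must be proved separately. That is the content of this subsection's title.

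\emph{Step 1: the right-hand side is nonnegative.} Fix $z\in\R_{\geq0}^{10}$. Each $m$ is binary, so every factor $z^m$ is a product of nonnegative coordinates, with $z_i^0=1$, and is therefore nonnegative. By Lemma~\ref{lem:matrix} and the $512$ verified row tests, every $Q_{m,E}$ is positive definite. Hence $Z_E(z)^TQ_{m,E}Z_E(z)\geq0$ for every real vector $Z_E(z)$. Summing over the blocks gives $M(z)B(z)\geq0$.

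\emph{Step 2: the multiplier is strictly positive on $\R_{\geq0}^{10}\setminus\{0\}$.} By~\eqref{eq:M}, $M$ splits into two parts. The first is $N^{-1}\sum_\alpha p_\alpha z^\alpha$. All $p_\alpha$ are positive and all $z^\alpha=z_iz_j$ are nonnegative on the orthant, so this part is nonnegative. The second is $z^TWz$. The same integer congruence test, applied to $A=2NW$ with the first supplied $R$, shows via Lemma~\ref{lem:matrix} that $W$ is positive definite. So $z^TWz>0$ whenever $z\neq0$. Together these give $M(z)>0$ for nonzero $z\in\R_{\geq0}^{10}$. The diagonal weights $p_{2\varepsilon_i}z_i^2$ alone would already give this. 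Either argument suffices, and I would present the positive-definiteness one because it does not depend on the monomial weights.

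\emph{Step 3: division and the origin.} For $z\neq0$, divide $M(z)B(z)\geq0$ by $M(z)>0$ to get $B(z)\geq0$. At $z=0$, $B$ is the restriction of the homogeneous degree-eleven polynomial $P$, so $B(0)=0$. This can also be seen directly: every term of~\eqref{eq:P} is a product of linear forms and vanishes at the origin. No step is a genuine obstacle, since the heavy lifting is already in the verified identity and the matrix certificates. The only point needing care is the origin, where $M$ vanishes and division is unavailable. Homogeneity of $B$ settles it.
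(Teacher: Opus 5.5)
Your proof is correct and follows the paper's argument: $M(z)B(z)\geq0$ comes from the nonnegative Gram terms, division is justified by strict positivity of $M$ on the nonzero orthant, and the origin is settled by homogeneity. The only difference is the source of strictness: the paper uses the positive diagonal weight $p_{\varepsilon_i+\varepsilon_i}$ at some $z_i>0$ and needs only $z^TWz\geq0$, whereas you use positive definiteness of $W$ and need only nonnegative weights; both facts are certified, so either works (minor slip: $512$ is the number of Gram congruence certificates, not of row tests).
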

\begin{proof}
Each term on the right-hand side of~\eqref{eq:boundary-cert} is
nonnegative: $z^m\geq0$ and the Gram matrix is positive definite. Hence
$M(z)B(z)\geq0$. The quadratic form $z^TWz$ is nonnegative, and every
weighted monomial in~\eqref{eq:M} is nonnegative. If $z\ne0$, choose an
index $i$ with $z_i>0$. The diagonal weight
$p_{\varepsilon_i+\varepsilon_i}$ is strictly positive, so the term
$(p_{\varepsilon_i+\varepsilon_i}/N)z_i^2$ is strictly positive. Thus
$M(z)>0$, and
division gives $B(z)\geq0$. At the origin, the degree-eleven polynomial
$B$ vanishes.
\end{proof}
The proof covers all further zero coordinates, regardless of which
original denominator factors vanish there. It does not use
Proposition~\ref{prop:H}; the two sign conditions are certified
independently.

\begin{proof}[Proof of Theorem~\ref{thm:main}]
Proposition~\ref{prop:H} proves the derivative hypothesis in
Lemma~\ref{lem:reduction}, and Proposition~\ref{prop:B} proves its boundary
hypothesis. The lemma therefore gives $P(x)\geq0$ for every positive tuple;
Lemma~\ref{lem:denom} then gives $C_{11}(x)\geq0$.
\end{proof}

\section{Construction and finite verification}\label{sec:computation}
The final proof objects consist of six JSON arrays: a monomial frame,
rational matrix data and integer congruence data for each of the two
identities. Their interpretation is fixed by
Sections~\ref{sec:language}--\ref{sec:boundary}. Once these arrays are
available, no numerical optimization is needed to verify the proof.

\subsection{From numerical candidates to exact data}
Semidefinite optimization provides candidate Gram coordinates, followed by
rational reconstruction and exact coefficient repair. For the derivative,
the recorded reconstruction has common coordinate denominator
$L=2\cdot10^{12}$. For the boundary, the
reconstruction uses denominator $N=10^{14}$ and enforces the multiplier
normalization~\eqref{eq:norm}.

The boundary multiplier combines a positive-coefficient quadratic part
with a positive-semidefinite quadratic-form part. The representation
in~\eqref{eq:M} gives flexibility in the search while retaining a direct
strict-positivity argument on the nonzero closed orthant. The resulting
$W$ and Gram matrices are in fact positive definite, as established by
their exact congruence data.

The two reconstruction routines exploit different coefficient structures.
For the derivative, the coordinate-to-coefficient map is assembled as a
sparse integer matrix. After initial rounding, an exact rational
correction is solved on selected columns, and the full coefficient residual
is recomputed. For the boundary, each uncentered triangular coordinate
contributes to a single exponent. After fixing the multiplier normalization,
the integer residual in each coefficient row is distributed among the
coordinates contributing to that row. The recorded boundary correction
changes any one Gram numerator by at most two units at denominator $10^{14}$.

To construct an integer congruence, the implementation computes a
Cholesky factor of a floating-point approximation to the candidate
rational matrix, approximates the inverse factor, rescales and rounds the
entries to integers. The translation construction uses scale $10^8$ and
the boundary construction uses scale $10^{10}$ for this last step. The
resulting matrix $R$ is then tested against the exact integral matrix $A$.
The exact inequalities~\eqref{eq:dd} certify the resulting $R$.
No Cholesky factorization is needed during replay.

Coefficient repair and positive-definiteness certification are distinct
steps. A repaired rational solution is accepted only when the complete
integer coefficient comparison succeeds and every congruence passes the
strict row test. A small numerical residual does not establish a polynomial
identity, and an approximate spectral margin does not establish matrix
positivity. The replay programs check the delivered rational arrays
independently of the optimizer's numerical state.

\subsection{A complete checking procedure}
The checking procedure takes as input the six arrays and the
formulas~\eqref{eq:P} and~\eqref{eq:H}.
All coefficient arithmetic is integral after multiplication by the
specified denominators.
\begin{enumerate}
\item Parse each frame and check its dimensions, binary masks, exponent
nonnegativity, distinctness and degree conditions. Consume the coordinate
arrays exactly in the prescribed block order, and reject missing or
unused coordinates.
\item Reconstruct each symmetric matrix according to~\eqref{eq:tri}.
Check the stated positive scaling factor and integrality of $A$; check
the shape of its matching integer $R$; form $RAR^T$ and test every row
in~\eqref{eq:dd}.
\item Expand $\ell H$ from the defining binomial products. Expand every
centered Gram coordinate using~\eqref{eq:four}, accumulate cyclic
representatives with the stabilizer factor, and check both support
coverage and every equation~\eqref{eq:trans-rows}.
\item Expand $B=P(z,0)$ from its definition. Form the integer polynomial
in~\eqref{eq:boundary-int} and the integer Gram expansion. Check support
coverage and every one of the $34,304$ literal coefficient equalities.
\item Check all $55$ multiplier weights are positive and check the exact
normalization~\eqref{eq:norm}. Report success only if every preceding
check has completed.
\end{enumerate}
In the third and fourth steps, support coverage ensures that every
generated term occurs in the coefficient list. The procedures exhaust
the finite arrays, matrix rows and coefficient sets.

\begin{table}[tb]
\centering\small
\caption{Sizes of the exact certificates. Boundary matrix counts and row
counts include the ten-dimensional multiplier matrix $W$.}
\label{tab:sizes}
\begin{tabular}{@{}lrr@{}}
\toprule
Quantity & Translation & Boundary\\
\midrule
Variables & 11 & 10\\
Target identity degree & 11 & 13\\
Gram blocks & 93 & 512\\
Largest Gram dimension & 184 & 115\\
Gram triangular coordinates & 248,264 & 1,235,364\\
Matrices with congruence certificates & 93 & 513\\
Strict integer row inequalities & 5,975 & 34,314\\
Coefficient comparisons & 6,069 & 34,304\\
Coefficient organization & Cyclic orbits & Literal exponents\\
\bottomrule
\end{tabular}
\end{table}

The six JSON certificate arrays occupy $53,960,275$ bytes in total.
The formal project contains $2,031$ Lean source files, including the
generated certificate data.

Cyclic averaging is used for $\ell H$ because that target is
cyclically invariant in eleven variables. The boundary identity retains
the literal ten-variable polynomial and uses no cyclic quotient. The
matrix test, coordinate convention and integer coefficient arithmetic are
shared across both parts.

\section{Formalization in Lean}\label{sec:lean}
The formalization uses Lean 4 \citep{deMouraUllrich2021Lean4} and
Mathlib \citep{mathlib2020}. The project exports the following statement, using
\texttt{Fin 11} for the cyclic index set and the real numbers for values:
\begin{quote}\small
\begin{verbatim}
theorem vasc11_inequality (x : Fin 11 -> Real)
    (hx : forall i, 0 < x i) : 0 <= c11 x
\end{verbatim}
\end{quote}
The exported declaration is \texttt{VascN11.vasc11\_inequality}.
Its only hypothesis is strict positivity of the coordinates. Certificate,
derivative and boundary assumptions occur in intermediate interfaces and
are discharged in the final theorem.
In the Lean definitions, the derivative polynomial is introduced as the
derivative at zero of the common-translation polynomial. The target-link
identity proves that this polynomial is equal to the displayed
$H=\sum_r\partial_rP$ used in the article.

\subsection{Semantic layers}
The formal development separates the following tasks.
First, the matrix layer interprets triangular integer arrays and proves
that the checked congruence conditions imply nonnegativity of the
associated real quadratic forms. This layer also accounts for array sizes
and the conversion from triangular polynomial coefficients to symmetric
matrix entries. Its proof uses matrix results from Mathlib to establish
the strict diagonal-dominance implication. The direct square expansion
in Lemma~\ref{lem:matrix} provides the article's self-contained account of
the same criterion.

Second, the sparse-polynomial layer represents a polynomial as a list of
pairs consisting of an exponent vector and an integer coefficient.
Evaluation is defined by the finite sum of the corresponding monomials.
Addition concatenates lists; multiplication forms pairwise exponent sums
and coefficient products. Normalization sorts exponents, combines equal
exponents and removes zero coefficients. The soundness results connect
these list operations and normalization to evaluation over $\R$.
A successful normalized equality check therefore implies equality of
real evaluations.

Third, the polynomial interfaces connect the certificate expressions to
the literal $\ell H$ and $MB$ used in this article. Target expressions are
built from the defining differences, denominator factors, excluded-index
products and boundary substitution, tying each checked identity to the
polynomial whose sign is required.

Finally, the analytic assembly formalizes cyclic invariance, subtraction
of a minimum coordinate, nonnegativity on a face, monotonicity under common
translation and clearing the positive denominators. The formal proof uses
monotonicity from a nonnegative derivative; this supplies the same
comparison as the polynomial integration in
Lemma~\ref{lem:reduction}.

\subsection{Checked caches and staged composition}
Large polynomial expressions are checked in stages. A leaf check compares
each Gram block's compact polynomial representation with the expansion of
the underlying block. The derivative target separately checks its cached
linear factor and derivative polynomial; the boundary target checks its
cached multiplier and boundary polynomial. Aggregate checks compare
products of these checked target factors with sums of the checked leaves.

The staged interface proves that the combination of leaf checks, matrix
checks, target checks and aggregate identity checks implies the required
real sign statement. In the derivative sum, the cyclic rotations are
included before the final equality comparison. Balanced addition and
normalization organize the large finite expressions; their semantic
correctness is proved. Thus using a cache changes the organization of the
computation while preserving the connection to the original target.

The conversion from JSON arrays to Lean declarations is recorded by
provenance manifests, but those converters are not runtime dependencies of
the Lean project. The checked-in Lean sources contain the concrete data,
and Lean recomputes the certificate conditions from those sources.
Consequently, successful compilation does not depend on the presence of
the original Python optimizer or on a previous Python replay report.

\subsection{Trust boundary}\label{sec:trust}
The finite integer and polynomial checks use \texttt{native\_decide}.
The dependency audit for the final theorem lists the base axioms
\texttt{propext}, \texttt{Classical.choice} and \texttt{Quot.sound},
together with $1,215$ generated native-computation axioms.
It reports no \texttt{sorryAx} dependency or unrelated custom axiom.
The result therefore trusts the native evaluation/compiler mechanism in
addition to the usual Lean kernel and base axiom system. The real-valued
soundness lemmas and analytic
assembly are ordinary Lean proofs.

The large exact checks are carried out by compiled code rather than
reduced solely by the kernel. The project includes an axiom inspection
command to check these dependencies for the distributed theorem.

\section{Reproducibility and computational evidence}\label{sec:reproduce}
The supplementary materials have two entry points. The
exact replay bundle contains the six JSON certificate arrays and two
standalone Python programs. The Lean project contains the embedded data,
formal soundness proofs, concrete checks and final theorem. Both use the
same certificate construction, but their verification entry points have
different software dependencies.

\subsection{Exact arithmetic replay}
The exact replay uses CPython and \texttt{python-flint} 0.9.0;
the supplied environment record specifies CPython 3.14.2 on Linux
x86-64. The replay measurements below were obtained on an Intel
Xeon Gold 6430 under Linux x86-64. With CPython 3.14.2 installed,
run the following commands from the prepared repository root:
\begin{quote}\small
\begin{verbatim}
sha256sum --quiet -c SHA256SUMS
python3.14 -m venv .venv
.venv/bin/python -m pip install -r requirements.txt
.venv/bin/python -I certificates/derivative/replay.py
.venv/bin/python -I certificates/boundary/replay_boundary.py
\end{verbatim}
\end{quote}
The checksum manifest checks the packaged inputs; newly generated timing
reports are excluded from it. Each replay writes a new JSON report
beside its program. Python optimization
flags \texttt{-O} and \texttt{-OO} must not be used: the replay programs
use assertions. The isolated-mode flag \texttt{-I} does not disable them.
Each program resolves its data paths relative to its own directory, so
replay does not depend on an original absolute workspace path.

The programs use exact Python integers and FLINT integer/rational
arithmetic. Once the environment is installed, they require no numerical
solver, external polynomial cache or network access. Their CPU affinity
and resource-limit calls are Linux/Unix specific.

\begin{table}[tb]
\centering\small
\caption{Standalone exact replay measurements.
These are single warm-environment runs, not a cross-platform benchmark.}
\label{tab:replay}
\begin{tabular}{@{}lrr@{}}
\toprule
Replay & Elapsed time (s) & Peak RSS (MiB)\\
\midrule
Translation & 4.71 & 131.11\\
Boundary & 3.84 & 193.29\\
\bottomrule
\end{tabular}
\end{table}

Table~\ref{tab:replay} reports sequential replays on byte-preserving
copies of the certificate inputs. Each program restricted itself to one
allowed CPU. The observations measure verification of already constructed exact
certificates, not the cost of discovering them, and do not measure the
Lean build. Both runs completed every coefficient and matrix check
without reaching a resource cutoff.

\subsection{Building the formal project}
The formal project is pinned to Lean 4.29.0 and Mathlib revision
\path{8a178386ffc0f5fef0b77738bb5449d50efeea95}, recorded in
its dependency manifest. After installing the pinned Lean toolchain,
run the following commands from the prepared repository root:
\begin{quote}\small
\begin{verbatim}
cd lean
lake exe cache get
VASC_BUILD_BATCH_SIZE=1 LEAN_NUM_THREADS=1 ./build.sh
lake build
lake env lean Audit.lean
\end{verbatim}
\end{quote}
The build script schedules certificate modules in topological
order; the settings above select one target per batch and one Lean
thread to limit competing large jobs. Individual large native checks can
require several GiB, so the exact replay's memory use should not be used
to size a first formal build. The project also supplies the required
worker-stack setting. The final command prints the transitive axiom
dependencies of the exported theorem.

An import and axiom inspection confirmed the
final declaration and the stated axiom inventory from the existing
compiled libraries, without rebuilding the project. The archived build
records document a successful full project build, but do not provide a
controlled end-to-end cold-build benchmark.

A rebuild uses Bash, Lake, the pinned Lean toolchain and Mathlib; it does
not invoke Python, a semidefinite solver or the original JSON converters.
The provenance directory records source hashes, block order, dimensions
and generated-source mappings. Hashes identify the distributed artifacts;
the exact computations check their mathematical conditions.

\section{Discussion}\label{sec:discussion}
The proof provides an explicit exact certificate for a universal
inequality, together with a formal connection between the finite
certificate and its real-valued conclusion. Its central design choice
is to certify two polynomial sign conditions that fit a short
common-translation argument. Within this decomposition, a linear
multiplier is sufficient for the derivative certificate, whereas the
supplied boundary certificate uses a quadratic multiplier with both
positive-coefficient and positive-semidefinite parts.

The certificate construction involves several structural details.
The centered and uncentered Gram bases require different coefficient
expansions. Cyclic orbit compression requires a stabilizer correction at
the all-ones exponent. The multiplier must remain strictly positive on
nonzero boundary subfaces. The congruence test establishes the
invertibility of its change of coordinates. Each condition is explicit
in the proof and finite checking procedure.

Possible further work includes reducing the size of the finite data,
comparing alternative exact matrix certificates on the same instance,
and replacing native evaluation by another computational verification
mechanism. Such changes would require new checks of the resulting proof
objects.

\section*{Data and software availability}
The exact-certificate bundle and Lean project are publicly available at
\url{https://github.com/EonMath/Vasc-n11-cert}. The version accompanying this
article is commit \path{1aa3b2c61cdddd85e196948e15ff01b7c2b8db31}.
The prepared repository contains the finite rational data, checksum
manifest and standalone replay programs under \path{certificates/},
and the formal project under \path{lean/}. The latter includes the
Lean sources, embedded certificate data, pinned dependency manifests,
build script and axiom inspection entry point. Their manifests record
the source-to-certificate provenance. Selected original producer scripts
and an input frame are included under \path{generation/} for inspection;
they were syntax checked, not rerun during packaging. A fully pinned
numerical discovery pipeline and portable JSON-to-Lean converters are
not included. The package supports verification of the supplied exact
data and formal sources, rather than end-to-end regeneration of the
search or formal source files. Detailed instructions and the generation
inventory are supplied in \path{reproduce.md} and
\path{generation_inventory.md}. A fresh clean build of this exported
repository has not been performed during package preparation.

\section*{Use of AI-assisted tools}
The Lean formalization was developed using FL-Prover of the MechMath
Agent Team (MMAT). MMAT's NL-Prover was used for all other work,
including the informal proof, computational certificates, literature
research and manuscript preparation. The mathematical evidence consists
of the explicit certificates, exact checks and formal theorem.

\bibliographystyle{KLMM/klmm}
\bibliography{refs}

@article{KaltofenLiYangZhi2012,
  author = {Kaltofen, Erich L. and Li, Bin and Yang, Zhengfeng and Zhi, Lihong},
  title = {Exact certification in global polynomial optimization via sums-of-squares of rational functions with rational coefficients},
  journal = {Journal of Symbolic Computation},
  year = {2012}, volume = {47}, number = {1}, pages = {1--15},
  doi = {10.1016/j.jsc.2011.08.002}
}

@article{MagronSafeyElDin2021,
  author = {Magron, Victor and Safey El Din, Mohab},
  title = {On exact {Reznick}, {Hilbert--Artin} and {Putinar}'s representations},
  journal = {Journal of Symbolic Computation},
  year = {2021}, volume = {107}, pages = {221--250},
  doi = {10.1016/j.jsc.2021.03.005}
}

@article{DavisPapp2024,
  author = {Davis, Maria M. and Papp, D{\'a}vid},
  title = {Rational dual certificates for weighted sums-of-squares polynomials with boundable bit size},
  journal = {Journal of Symbolic Computation},
  year = {2024}, volume = {121}, pages = {102254},
  doi = {10.1016/j.jsc.2023.102254}
}

@misc{XuYu2026,
  author = {Xu, Tiancheng and Yu, Wensheng},
  title = {A Finite Certificate for the Positive {$n=11$} {Vasc} Inequality},
  year = {2026}, eprint = {2610.04549}, archivePrefix = {arXiv}, primaryClass = {cs.SC},
  url = {https://arxiv.org/abs/2610.04549}
}

@misc{GuoQiuCaoFeng2026,
  author = {Guo, Dakai and Qiu, Ruichen and Cao, Yichuan and Feng, Ruyong},
  title = {A Finite Certificate for the Positive {$n=9$} {Vasc} Inequality},
  year = {2026}, eprint = {2606.06136}, archivePrefix = {arXiv}, primaryClass = {cs.SC},
  url = {https://arxiv.org/abs/2606.06136}
}

@inproceedings{deMouraUllrich2021Lean4,
  author    = {de Moura, Leonardo and Ullrich, Sebastian},
  title     = {The {Lean 4} Theorem Prover and Programming Language},
  booktitle = {Automated Deduction -- CADE 28},
  series    = {Lecture Notes in Computer Science},
  volume    = {12699},
  pages     = {625--635},
  publisher = {Springer},
  address   = {Cham},
  year      = {2021},
  doi       = {10.1007/978-3-030-79876-5_37},
}

@inproceedings{mathlib2020,
  author    = {{The mathlib Community}},
  title     = {The {L}ean {M}athematical {L}ibrary},
  booktitle = {Proceedings of the 9th {ACM} {SIGPLAN} International Conference
               on Certified Programs and Proofs},
  series    = {CPP 2020},
  publisher = {ACM},
  address   = {New Orleans, LA, USA},
  year      = {2020},
  month     = jan,
  doi       = {10.1145/3372885.3373824}
}
\end{document}